\DeclareSymbolFont{AMSb}{U}{msb}{m}{n}
\begin{document}


\def\N{\mathbb N}
\def\M{\mathbb M}
\def\fdp{f\circ d_{p}}
\def\R{\mathbb R}
\def\DE{D(\mathcal E)}
\def\DEC{\mbox{D}_c(\mathcal E)}
\def\mint{\,\mathop{\makebox[0pt][c]{$\int$}\makebox[0pt][c]{--}\,}}
\def\mintl{\!\!\mathop{\makebox[0pt][c]{$\int$}\makebox[0pt][c]{--}\!}\limits}
\def\suml{\sum\limits}
\def\Er{\mathcal E^r}
\def\E{\mathcal E}
\def\Erw{\mathbb E}
\def\intl{\int\limits}
\newcommand{\Br}[1]{B_r(#1)}
\newcommand{\Bbr}[1]{\mathbb B_r(#1)}
\newcommand{\bnk}[1]{b_{n,k}(#1)}
\newcommand{\bn}[1]{b_{n}(#1)}
\def\phx{\phi_x}
\def\Mkx{\M_k(x)}
\def\bm{{\mathop{\mbox{\normalfont \,vol}}}}
\def\dbm{{{\mbox{\normalfont \tiny d \!\!vol}}}}
\def\dbmn{{{\mbox{\normalfont d \!\!vol}}}}
\def\Qrx{q_r(x)}
\def\Regset{X\setminus S_X}
\def\volsing{\mathbb S_X}
\def\angle{\sphericalangle}
\def\Eomega{\E_\Omega}
\def\Deomega{\mathcal D(\Eomega)}
\def\Meas{\cal M}
\def\1{\,{\makebox[0pt][c]{\normalfont    1}
\makebox[2.5pt][c]{\raisebox{3.5pt}{\tiny {$\|$}}}
\makebox[-2.5pt][c]{\raisebox{1.7pt}{\tiny {$\|$}}}
\makebox[2.5pt][c]{} }}
\def\one{\1 }
\def\eI{[0,1]}

\def\one{\1 }
\def\B{\mathcal B}
\newcommand{\infl}[1]{\inf\limits_{#1}}
\newcommand{\gsk}[1]{\left\{ #1 \right\}}
\newcommand{\norm}[1]{\left\| #1 \right\|}
\def\P{{ /\!\!/}}
\def\ul{\underline}
\def\H{\mathcal H}
\def\for{\mbox{ for }}
\def\Ricc{\mathop{\mbox{\normalfont Ricc}}}
\def\Sec{\mathop{\mbox{Sec}}}

\def\sth{\,|\,}
\def\Hess{\mathop{\mbox{{\normalfont Hess}}}}
\def\einh{\frac{1}{2}}
\def\ol{\overline}
\def\abs{~\\~\\}
\def\sabs{~ \smallskip ~}
\def\k{,\,}
\def\supp{\mathop{\mbox{supp}}}
\def\Curv{\mbox{Curv}}

\def\cut{\mbox{\normalfont Cut}}
\def\mt{\rightarrow}
\newcommand{\lgauss}[1]{{\lfloor #1 \rfloor}}
\def\falls{\mbox{ if }}
\def\sonst{\mbox{ else}}
\def\grad{\mbox{grad}}
\def\div{{{\mathop{\,{\rm div}}}}}
\def\diam{{{\mathop{\,{\rm diam}}}}}
\def\Id{{\mathop{{\bf 1}_{\small \R^d}}}}
\def\k{,\,}
\def\dist{{\mbox{dist}}}
\def\Ind{{\mathop{\mbox{I}}}}
\def\dil{{\mathop{\mbox{dil}}}}
\def\H{\mathbb H}
\def\mtm{M \times M}
\def\BV{\mbox{\normalfont BV}}
\def\gij{{g_{ij}}}
\def\gije{{g_{ij}^\epsilon}}
\def\Lip{\mathop{\mbox{\normalfont Lip}}}
\def\lipc{\mathop{\mbox{\textit {lip}}}}
\def\Lipc{\mathop{\mbox{\textit {Lip}}}}
 \def\loc{{\mbox{\scriptsize loc}}}
\renewcommand{\bullet}{{\mathbf \cdot  }}
\def\Cut{\mbox{\normalfont Cut}}
\def\dist{\mbox{\normalfont dist}}
\def\sign{\mbox{\normalfont sign}}
\def\pr{\mbox{\normalfont pr}}

\def\Re{{\rm Re}}

\def\Im{{\rm Im}}

\def\ioe{\frac{1}{\epsilon}}
\def\fe{f_{\epsilon}}
\def\Ebr{\E^{b,r}}
\def\Dco{D_c(\E_\Omega)}

\newcommand{\D}{{\mathbb{D}}}
\newcommand{\T}{{\mathbb{T}}}
\newcommand{\DD}{\overline{\mathbb{D}}}
\newcommand{\LL}{{\mathbb{L}}}
\newcommand{\EE}{{\mathbb{E}}}

\newenvironment{bew}{\begin{proof}}
{\end{proof}}

\newcommand\iitem{\stepcounter{idesc}\item[(\roman{idesc})]}
\newenvironment{idesc}{\setcounter{idesc}{0}
\begin{description}} {\end{description}}

\def\dnvol{{d \ol{\normalfont v}}}

\def\P{{\mathcal P}}
\def\Z{{\mathcal Z}}
\def\CMS{CM[0,1]^2}
\def\eI{{[0,1]}}
\def\L2{{L^2}}
\def\L11{\mathcal P _{ac}}
\def\S1{{S^1}}
\def\llangle{\langle \langle}
\def\rrangle{\rangle \rangle}

\newcounter{zaehler}
\setcounter{zaehler}{1}
\newcommand{\inszahl}{{\normalfont\tiny\thezaehler.}\addtocounter{zaehler}{1}}

\def\Pbeta{\mathbb P^\Beta}
\def\G{\mathcal G}
\def\Qbeta{{\mathbb Q^\beta}}
\def\Qnull{{\mathbb Q^0}}
\def\dreh{\mathcal R}
\def\Pbeta{\mathbb P^\beta}
\def\ssurd{{\hfill $\surd$}}
\def\bbox{{\hfill $\Box$}}
\def\Def{{\bf  \inszahl Definition.~}}
\def\Satz{{\bf \inszahl Satz. ~}}
\def\Lem{{\bf \inszahl Lemma.~}}
\def\Bew{{\it Beweis. ~}}
\def\Kor{{\bf \inszahl Korollar. ~}}
\def\Top{~\\{\bf \inszahl\,}}
\def\id{{\mbox{id}}}
\def\L{{\mathcal L}}
\def\Lbeta{{\mathcal L^\beta}}

\def\intfgamma{{\mbox{$\int\!F(\gamma)$}}}
\def\intfg{{\mbox{$\int\!F(g)$}}}
\def\intggamma{{\mbox{$\int\!G(\gamma)$}}}
\def\inthg{{\mbox{$\int\!H(g)$}}}
\def\nN{\mathbb N}
\def\H{{\mathcal H}}
\def\C{{\mathcal C}}
\def\W{{\mathcal W}}
\def\D{{\mathcal D}}
\def\smint{{\mbox{$\int$}}}
\def\sabs{ \\ \smallskip}
\def\inthg{{\mbox{$\int\!H(g)$}}}
\def\rinv{{\rm rinv}}
\def\remarkend{}
\def\dnull{{d_\E}}

\newcommand{\leb}{{\mbox{Leb}}}
\newcommand{\Cyl}{{\mathfrak{C}}}
\newcommand{\Syl}{{\mathfrak{S}}}
\newcommand{\Zyl}{{\mathfrak{Z}}}
\def\smint{{\mbox{$\int$}}}

\newcommand{\MCP}{\mbox{{\sf{MCP}}}}
\newcommand{\CD}{\mbox{{\sf{CD}}}}
\newcommand{\Length}{\mbox{\rm Length}}
\newcommand{\Ent}{\mbox{\rm Ent}}
\newcommand{\Gaps}{\mbox{\rm gaps}}

\newcommand{\Dom}{{\it{Dom}}}
\newcommand{\tr}{\mbox{\rm tr}}
\newcommand{\X}{{\mathbb{X}}}
\newcommand{\A}{{\mathcal{A}}}
\newcommand{\U}{{\mathcal{U}}}
\newcommand{\V}{{\mathcal{V}}}
\newcommand{\inv}{\mbox{\sf inv}}
\newcommand{\cum}{\mbox{\sf cum}}
\newcommand{\varh}{{h}}
\newcommand{\Pz}{{\mathcal{P}_2}}
\newcommand{\Pe}{{\mathcal{P}}}
\newcommand{\riccurv}{\,\underline{\mathbb{C}\mbox{\rm urv}}}
\newcommand{\alexcurv}{\,\underline{\mbox{{\sf curv}}}}

\newtheorem{theorem}{Theorem}[section]
\newtheorem{lemma}[theorem]{Lemma}
\newtheorem{definition}[theorem]{Definition}
\newtheorem{corollary}[theorem]{Corollary}
\newtheorem{proposition}[theorem]{Proposition}
\newtheorem{remark}[theorem]{Remark}
\newtheorem{remarks}[theorem]{Remarks}
\newtheorem{example}[theorem]{Example}

\newcommand{\Q}{{\mathbb{Q}}}
\newcommand{\Pp}{{\mathbb{P}}}

\renewcommand{\labelenumi}{(\roman{enumi})}

\def\proof{\smallskip \noindent{\textit{Proof.~}}}

\begin{center}
\vspace{-2em}
 {\large AN OPTIMAL TRANSPORT  VIEW ON SCHR\"ODINGER'S EQUATION}

\smallskip
\smallskip {Max-K. von Renesse
\def\thefootnote{} \footnote{
February  27 2009, Technische Universit\"at Berlin, email:
mrenesse\@@math.tu-berlin.de, \textbf{Keywords} Schr\"odinger
Equation, Optimal Transport, Newton's law, Symplectic Submersion.
\textbf{AMS Subject Classification}  81C25, 82C70, 37K05 } }

\bigskip
{\small \bf Abstract}

\bigskip

\begin{minipage}{16.3cm}
{\small 
We show that the Schr\"odinger equation  is a lift of Newton's law
of motion $\nabla^\W_{\dot \mu} \dot \mu = -\nabla^\W F(\mu)$  on
the space of probability measures, where derivatives are taken
w.r.t. the Wasserstein Riemannian metric. Here the potential
  $\mu \to F(\mu)$ is the 
%
sum of the total classical potential energy $\langle V,\mu\rangle$
of the extended system
 and  its  Fisher information
  $  \frac {\hbar^2} 8 \int |\nabla \ln \mu |^2
d\mu$. The precise relation is established via a well known
('Madelung') transform which is shown to be a symplectic submersion
of the standard symplectic
 structure of complex valued functions into the
canonical symplectic space over  the Wasserstein space. 
All computations are conducted in the framework of Otto's formal
Riemannian calculus for optimal transportation of probability
measures. }
\end{minipage}

\end{center}

\def\H1{{\mathbb H^1}}




\section{\normalsize \normalfont INTRODUCTION}
 Recent applications  of optimal transport theory have demonstrated
that certain analytical and geometric problems on finite dimensional
Riemannian manifolds $(M, g)$ or more general metric measure spaces
$(X,d,m)$ can nicely be treated in the corresponding ('Wasserstein')
space of probability measures $\P_2(X) = \{ \mu \in \P(X)\,|\,
\int_X d^2(x,o) \mu(dx) <\infty\}$ equipped with  the quadratic
Wasserstein metric
\[d_\W (\mu, \nu) = \inf \left\{ \iint_{X^2} d^2(x,y)
\Pi(dx,dy)\,\left |\, \Pi \in \mathcal P(X^2), \Pi(X\times A) = \nu,
\Pi(A\times X) =\mu(A),A \in \mathcal B (X)\right. \right\}^{1/2}.\]
This metric corresponds to a relaxed version of Monge's  optimal
transportation problem with cost function   $c(x,y)= d^2(x,y)$
\[ \inf \left\{ \int_{X} c(x,Ty) \mu(dx) \,\left |\, T: X\to X, T_*\mu =\nu\right.\right\},\]
with $T_*\mu$ denoting the image (push forward) measure of $\mu \in \mathcal P(X) $ under the map $T$.\smallskip

The physical relevance of the Wasserstein distance was highlighted
by the works of e.g. BENAMOU-BRENIER \cite{benamou-brenier} and OTTO
\cite{MR1842429} who established  in  the  smooth Riemannian case
$X=M$ and smooth  initial distribution $\mu$
\[
d_\W^2 (\mu, \nu) = \inf \left\{ \int _0^1 \int_M |\nabla
\phi_t(x)|^2 \mu_t(dx)dt \,\left |\,
\begin{array}{cc}
 \phi\in C^\infty(]0,1[\times M ), t \to \mu_t \in C([0,1], \mathcal P(M))\\
 \dot \mu_t = - \div(\nabla \phi_t   \mu_t ), t\in ]0,1[, \mu_0 =\mu, \mu_1 =\nu
\end{array}
 \right\}\right. ,\]
showing that   $d_\W$ is associated to a  formal Riemannian
structure  on $ \mathcal P (M) $ given by
\[
   T_\mu\P(M) =\{ \psi :M \to \R, \int_M \psi (x) dx=0 \},\]
\[ \norm{\psi}^2_{T_\mu\P}  = \int_M |\nabla \phi |^2d\mu,   \mbox{ for } \psi = -\div(\mu \nabla\phi).\]
In view of the continuity equation
\[  \dot \mu_t = -\div( \dot \Phi_t \mu_t) \]
for a smooth flow $(t,x) \to \Phi_t(x)$  on $M$, acting on measures
$\mu$ through push forward $\mu_t = (\Phi_t)_*\mu_0$,  this
identifies the Riemannian energy of a curve $t\to \mu_t \in \P(M)$
with the minimal required  kinetic energy
\[ E_{0,t}(\mu)  = \int _0 ^t \norm{\dot \mu_s}_{T_{\mu_s}\P(M)} ^2 ds = \int_0^t \int_M   |\dot \Phi(x,s) |^2 \mu_s(dx)ds.\]
A major reason for the  success of this framework is the
interpretation of evolution equations of type
\[ \partial_t   u = \div (u_t
\nabla F'(u)), \] with   $F'$ being the $L^2$-Frechet  derivative of
some smooth functional $F$ on $L^2(M, dx)$, as $d_\W$-gradient
('steepest descent') flow \begin{equation*} \dot \mu = -\nabla^\W F
(\mu) \label{gradflow} \end{equation*} for the measures $\mu (dx) =
u(x) dx$. Properties of the flow my thus be deduced from the
geometry  of the funtional $F$ with respect to $d_\W$. A
particularly important case is the Boltzmann entropy $F(u) = \int_M
u \ln u \,dx$ which induces the heat flow.
\\

In this note we propose an example of another natural class of
dynamical systems associated
with the Riemannian metric on $\P(M)$ and which can be written as   
\begin{equation}
\nabla^\W_{\dot \mu} \dot \mu =  - \nabla^\W F(\mu).
 \label{eulerlagrange}
\end{equation}
Equation \eqref{eulerlagrange} describes the Hamiltonian flow on
$T\P(M)$ induced from the Lagrangian
\[
 L_F : T\P(M) \to \R; \quad L_F(\psi)= \frac 1 2 \norm{\psi}^2_{T_\mu\P} - F(\mu) \quad \mbox{ for }  \psi \in T_{\mu}\P(M) \]
with the functional   $F: \P(M)\to \R$ now playing the role of a
potential field for the system. Apart from the closely related
recent work \cite{gnt} it seems that a systematic approach to such
Hamiltonian flows on $\P(M)$ is missing in the literature. The
example we want to propose is obtained by  choosing
\begin{equation} F(\mu) =  \int_M V(x) \mu(dx) + \frac {\hbar^2} 8 I(\mu), \label{genpot}
\end{equation}
where
\[ I(\mu) =  \int_M |\nabla \ln \mu|^2 d\mu. \]
We show that via an appropriate transform the flow
\eqref{eulerlagrange} solves the Schr\"odinger equation
\begin{equation}
 i \hbar \partial _t \Psi = - \hbar^2 /2    \Delta  \Psi + \Psi \, V.  \label{seq}
\end{equation}
The functional $I$ is known today as Fisher information. Physically
$I(\mu)$ is the instantaneous kinetic energy required by the
unperturbed heat flow at state $\mu$. The prominent role of $I$ for
quantum behaviour was noticed long ago, e.g.\ in a classical paper
by BOHM \cite{MR0046287}, using the following well-known system of
generalized Hamilton-Jacobi and transport equations
\begin{equation}
 \begin{split}
 \partial _t   S+ \frac  1 2 |\nabla   S|^2  & + V + \frac {\hbar^2} 8 \bigl ( | \nabla \ln \mu |^2 - \frac 2 \mu \Delta \mu \bigr) =0  \\
 \partial _t\mu & + \div (\mu \nabla   S  )=0.
 \end{split} \label{madelungflow}
\end{equation}
This system  was poposed by MADELUNG very early \cite{madelung} as
 an equivalent description of the wave function $\Psi= \sqrt \mu e ^{\frac i \hbar S}$
under the Schr\"odinger equation. In the sequel it will be referred
to as Madelung flow. Various attempts to derive it from first order
principles can be found in the physics literature, e.g.\ most
recently in \cite{MR1914637}.\\

Our present note starts with the observation that equations
\eqref{madelungflow} and
 \eqref{eulerlagrange} are essentially the same (theorem \ref{mainthm}),  where the latter is  understood in the sense of LOTT's recently proposed
  second order calculus on
Wasserstein space, c.f.\  \cite{MR2358290}.
%
%
A virtue of formula \eqref{eulerlagrange} is its very intuitive
physical interpretation as  Newton's law for the motion of an
extended system with inertia (we have put mass density equal to
one). Acceleration comes from a a gradient field of a potential $F$
which is the total mechanical potential of the extended system plus
its 'kinetic potential' w.r.t.  the heat flow.  (As usual the case
of a classical single particle moving in a potential field is
embedded naturally in
 \eqref{eulerlagrange} if one puts $\hbar =0$ and
$\mu=\delta_{x}$.)\\

Secondly  we show that the  two equations  \eqref{eulerlagrange} and
\eqref{seq} are, modulo constant phase shifts, symplectically
equivalent. More precisely, we compute the canonical symplectic form
on the tangent bundle $T\P(M)$ induced from the Levi-Civita
connection of the Wasserstein metric on $\P(M)$ and  show that the
map $\Psi= |\Psi| e^{\frac i h S} \mapsto -\div (|\Psi|^2 \nabla
S)$, which we shall call Madelung transform,  is a symplectic
submersion of the standard Hamiltonian structure of the
Schr\"odinger equation on the space of complex valued functions into
the Hamiltonian structure associated to \eqref{eulerlagrange} on the
tangent bundle $T\P(M)$. Except for its curiosity in Wasserstein
geometry this result seems to support the point of view of some
authors  that the familiar complex valued form (\ref{seq}) of the
Schr\"odinger equation is the consequence of a smart choice of
coordinates in which the intuitive but
unhandy dynamical system \eqref{eulerlagrange} resp. \eqref{madelungflow} can be solved very efficiently. \\

%

Obviously, much of what is presented below resembles the familiar
Schr\"odinger folklore, c.f.\ in particular NELSON's theory of
stochastic mechanics \cite{MR783254} and its follow-ups, e.g.\
\cite{MR2165828}. And in fact nothing really new about the
Schr\"odinger equation itself is implied at this point. Our aim is
the connection to Wasserstein geometry which in our view  gives a
very intuitive picture. Finally, we  emphasize that all of our
computations are completely formal, a rigorous mathematical
treatment of these ideas is subject to future work.

\section{\normalsize \normalfont SCHR\"ODINGER EQUATION FROM NEWTON's LAW OF MOTION ON $(\P(M),d_\W)$}
The computations below are conducted on the formal Riemannian
manifold of fully supported smooth probability measures equipped
with the Wasserstein metric tensor, as initiated in
\cite{MR1842429,MR1760620} and extended in \cite{MR2358290},
ignoring full mathematical generality or rigor. (The basic
background material taken from \cite{MR2358290,MR1842429} can be
found in the appendix.) In the sequel we shall often identify $\mu
\in \P^\infty(M)$ with its density  $\mu \stackrel {\wedge}{=}
d\mu/dx$.

\begin{theorem}\label{mainthm}\textit{For $V \in C^\infty (M)$ let $F: \P^\infty (M) \to \R$  defined as in (\ref{genpot}).
Then any smooth local solution $t\to \mu(t) \in \P(M)$ of
\eqref{eulerlagrange}
%
yields a local solution $(\mu_t, \overline S_t)$ of  the Madelung
flow \eqref{madelungflow}, where
\[ \bar S(x,t) = S(x,t) + \int_0^t L_F(S_\sigma,\mu_\sigma)d\sigma  \]
and $S(x,t) $ is the velocity potential of the flow $\mu$, i.e.
satisfying $\int _M S d\mu =0$ and $\dot \mu_t = -\div (\nabla
S_t\mu)$. Conversely, let  $(\mu_t , S_t)$ be smooth a local
solution of \eqref{madelungflow} then   $t\to\mu_t \in \P(M)$ solves
\eqref{eulerlagrange}.}
\end{theorem}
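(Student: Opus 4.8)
The plan is to reduce both equations to identities for the velocity potential $S_t$ and then match them term by term. In Otto's calculus a tangent vector $\dot\mu_t=-\div(\mu_t\nabla S_t)$ is represented by its potential $S_t$, and the central ingredient I will invoke (from the second order calculus of \cite{MR2358290}, recalled in the appendix) is that the Levi-Civita covariant acceleration $\nabla^\W_{\dot\mu}\dot\mu$ is represented in potential form by the material acceleration of the underlying flow, namely $\partial_t S_t+\frac12|\nabla S_t|^2$. Indeed, for a gradient velocity field $v=\nabla S$ the convective term satisfies $\nabla_{\nabla S}\nabla S=\frac12\nabla|\nabla S|^2$ by symmetry of $\Hess S$, so the material acceleration is $\partial_t\nabla S+\nabla_{\nabla S}\nabla S=\nabla(\partial_t S+\frac12|\nabla S|^2)$; this is precisely the geodesic operator, vanishing exactly on Hamilton--Jacobi solutions.

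Second, I will compute the Wasserstein gradient of $F$. Writing a tangent vector as $\psi=-\div(\mu\nabla\phi)$ one checks $dF(\psi)=\int\nabla\frac{\delta F}{\delta\mu}\cdot\nabla\phi\,d\mu$, so $\nabla^\W F$ is represented in potential form by the flat first variation $\frac{\delta F}{\delta\mu}$. The potential part is immediate, $\frac{\delta}{\delta\mu}\int V\,d\mu=V$. The substantial computation is the Fisher information $I(\mu)=\int|\nabla\mu|^2/\mu\,dx$: varying $\mu\mapsto\mu+\epsilon h$ and integrating by parts gives $\frac{\delta I}{\delta\mu}=|\nabla\ln\mu|^2-\frac{2}{\mu}\Delta\mu$, which is exactly the quantum potential appearing in \eqref{madelungflow}. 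Hence $\frac{\delta F}{\delta\mu}=V+\frac{\hbar^2}{8}\bigl(|\nabla\ln\mu|^2-\frac{2}{\mu}\Delta\mu\bigr)$.

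With these two facts, \eqref{eulerlagrange} in potential form reads $\nabla(\partial_t S+\frac12|\nabla S|^2)=-\nabla\frac{\delta F}{\delta\mu}$, i.e. $\partial_t S+\frac12|\nabla S|^2+\frac{\delta F}{\delta\mu}=c(t)$ for a purely time-dependent $c(t)$ (the integration constant of the spatial gradient). To pin down $c(t)$ I integrate this identity against the probability measure $\mu_t$: differentiating the normalization $\int S\,d\mu=0$ and using the transport equation gives $\int\partial_t S\,d\mu=-\int|\nabla S|^2\,d\mu$, while the $1$-homogeneity of $F$ (both $\langle V,\mu\rangle$ and $I$ scale linearly) yields $\int\frac{\delta F}{\delta\mu}\,d\mu=F(\mu)$. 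Collecting terms gives $c(t)=-\bigl(\frac12\norm{\dot\mu_t}^2-F(\mu_t)\bigr)=-L_F(S_t,\mu_t)$. Setting $\bar S=S+\int_0^t L_F(S_\sigma,\mu_\sigma)\,d\sigma$, so that $\nabla\bar S=\nabla S$ and $\partial_t\bar S=\partial_t S+L_F$, absorbs the constant and produces exactly the first line of \eqref{madelungflow}; the second line is the transport equation, unchanged since $\bar S$ and $S$ differ only by a spatial constant. The converse is the same chain read backwards: given a Madelung solution, the transport equation exhibits $S_t$ as a velocity potential, and taking the spatial gradient of the Hamilton--Jacobi line returns $\nabla^\W_{\dot\mu}\dot\mu=-\nabla^\W F(\mu)$.

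The main obstacle is the first step: justifying, within the purely formal calculus, that the covariant acceleration on $(\P(M),d_\W)$ is represented by $\partial_t S+\frac12|\nabla S|^2$, together with fixing the sign conventions for $\nabla^\W F$. Everything downstream---the Fisher variation and the constant-matching---is essentially bookkeeping once this identification is in place; the care needed there is mostly in the integration-by-parts terms (e.g.\ $\int\Delta\mu\,dx=0$), which I am treating formally on a closed $M$ without boundary.
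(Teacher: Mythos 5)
Your proposal is correct and follows essentially the same route as the paper: both identify the covariant acceleration with the potential $\partial_t S + \frac12|\nabla S|^2$ (Lott's formula), compute $\nabla^\W F$ via the first variation $V + \frac{\hbar^2}{8}\bigl(|\nabla\ln\mu|^2 - \frac{2}{\mu}\Delta\mu\bigr)$, deduce the Hamilton--Jacobi identity up to a constant $c(t)$, and pin down $c(t) = -L_F(S_t,\mu_t)$ from the normalization $\int S\,d\mu = 0$. Your appeal to $1$-homogeneity of $F$ is just an explicit packaging of the step the paper performs implicitly (namely $\int \frac{2}{\mu}\Delta\mu\,d\mu = 2\int\Delta\mu\,dx = 0$ on closed $M$), so the two arguments coincide.
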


\noindent
\begin{proof}
 Let    $\mu$ solve (\ref{eulerlagrange})
where $\nabla^\W$ is the Wasserstein gradient and $ \nabla^\W_{\dot \mu} \dot \mu$  is the covariant derivative  associated to the Levi-Civita connection on $T\P(M)$. Let $(x,t) \to S(x,t)$ denote the velocity potential  of $\dot \mu$ (cf. section \ref{append}), then according to \cite[proposition 4.24]{MR2358290}
the left  hand side of (\ref{eulerlagrange}) is computed as
\[
 - \div\left(\mu \nabla \left( \partial _t S + \frac 1 2 |\nabla S|^2\right) \right),
\]
where the right hand side of (\ref{eulerlagrange}) equals (cf. section \ref{append})
\[ \div\left( \mu \nabla \left(   V + \frac {\hbar^2} 8 \bigl(  |\nabla \ln \mu|^2 - \frac 2 \mu \Delta \mu\bigr)\right)\right).
\]
Since $\mu_t $ is fully supported on $M$ this implies
\[ \partial _t S + \frac 1 2 |\nabla S|^2+ V +  \frac {\hbar^2} 8  \bigl( |\nabla \ln \mu|^2 - \frac 2 \mu \Delta \mu \bigr)= c(t)\]
for some function $c(t)$. To compute $c(t)$ note that due to the normalization $\langle S_t, \mu_t\rangle=0$
\begin{align*}
 0 & = \partial _t \langle S_t, \mu_t\rangle \\
& = c(t) - \frac 1 2 \langle |\nabla S|^2 , d\mu\rangle - F(\mu) + \langle S, \dot \mu \rangle \\
&  = c(t) - \frac 1 2 \langle |\nabla S|^2 , d\mu\rangle - F(\mu)  + \langle |\nabla S |^2 , \mu \rangle = c(t) + L_F(S_t, \mu_t).
\end{align*}
Hence the pair $t\to (  \ol S_t,\mu_t)$ with $\bar S(x,t) = S(x,t)
+\int_0^t L_F(S_\sigma,\mu_\sigma)d\sigma $ solves
\eqref{madelungflow}. The converse statement is now also
obvious.\bbox
 \end{proof}

\begin{corollary}\label{schroedcor}\textit{For $V \in C^\infty (M)$ let $F: \P^\infty (M) \to \R$  defined as in (\ref{genpot}).
Then any smooth local solution $t\to \mu(t) \in \P(M)$ of
\begin{equation*}
\nabla^\W_{\dot \mu} \dot \mu =  - \nabla^\W F(\mu),
\end{equation*}
%
yields a local solution of  the Schr\"odinger equation \eqref{seq}
via
\begin{equation}  \Psi(t,x) = \sqrt {\mu(t,x)} e^{\frac i \hbar   \bar S(x,t) }
\label{invmadtrans}
\end{equation}
where
\[ \bar S(x,t) = S(x,t) + \int_0^t L_F(S_\sigma,\mu_\sigma)d\sigma  \]
and $S(x,t) $ is the velocity potential of the flow $\mu$, i.e.
satisfying $\int _M S d\mu =0$ and $\dot \mu_t = -\div (\nabla
S_t\mu)$. }
\end{corollary}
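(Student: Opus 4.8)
The plan is to read the corollary as an immediate consequence of Theorem~\ref{mainthm} combined with the classical computation underlying the Madelung transform. By Theorem~\ref{mainthm} a smooth local solution $t\to\mu(t)$ of \eqref{eulerlagrange} already yields a solution $(\mu_t,\bar S_t)$ of the full Madelung flow \eqref{madelungflow}, with $\bar S$ exactly as in \eqref{invmadtrans}. Hence it only remains to verify that the substitution $\Psi=\sqrt\mu\, e^{\frac i\hbar \bar S}$ transforms \eqref{madelungflow} into the Schr\"odinger equation \eqref{seq}; this is the ``backwards'' Madelung calculation.

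Concretely, I would insert the ansatz $\Psi=R\,e^{\frac i\hbar \bar S}$ with $R=\sqrt\mu$ into the operator $i\hbar\partial_t\Psi+\tfrac{\hbar^2}2\Delta\Psi-V\Psi$, compute $\partial_t\Psi$, $\nabla\Psi$ and $\Delta\Psi$ by the product and chain rules, and divide out the common phase $e^{\frac i\hbar\bar S}$. Separating the resulting expression into real and imaginary parts, the imaginary part, after multiplication by $2R$ and using $\partial_t(R^2)=\partial_t\mu$, becomes the continuity equation $\partial_t\mu+\div(\mu\nabla\bar S)=0$; since $\bar S-S$ is a function of $t$ alone one has $\nabla\bar S=\nabla S$, so this is the second line of \eqref{madelungflow}. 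The real part, after dividing by $-R$, becomes the generalized Hamilton--Jacobi equation with an extra term $-\tfrac{\hbar^2}2\,\Delta R/R$.

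The single identity worth recording is the rewriting of this quantum potential:
\[ -\frac{\hbar^2}2\,\frac{\Delta\sqrt\mu}{\sqrt\mu} = \frac{\hbar^2}8\Bigl(|\nabla\ln\mu|^2-\frac 2\mu\Delta\mu\Bigr), \]
which follows at once from $\nabla\sqrt\mu=\tfrac12\mu^{-1/2}\nabla\mu$ and one application of $\div$. With it the real part reads exactly as the first line of \eqref{madelungflow} (with $S$ replaced by $\bar S$), so both Madelung equations are reproduced and $\Psi$ solves \eqref{seq}.

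I expect no genuine obstacle here, since Theorem~\ref{mainthm} does the geometric work and what remains is the classical Madelung intertwining. The only points requiring a little care are the bookkeeping of the real/imaginary split and the observation that the purely time--dependent shift $\int_0^t L_F(S_\sigma,\mu_\sigma)\,d\sigma$ built into $\bar S$ has vanishing spatial gradient: it is therefore invisible to the continuity equation and supplies precisely the correction that renders the Hamilton--Jacobi equation homogeneous, in accordance with the computation of $c(t)$ in Theorem~\ref{mainthm}.
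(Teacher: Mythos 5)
Your proposal is correct and takes essentially the same route as the paper: there the corollary is stated without separate proof, being regarded as an immediate consequence of Theorem~\ref{mainthm} together with Madelung's classical observation (cited in the introduction) that the substitution $\Psi=\sqrt{\mu}\,e^{\frac{i}{\hbar}\bar S}$ intertwines the system \eqref{madelungflow} with the Schr\"odinger equation \eqref{seq}, which is exactly the real/imaginary-part computation you carry out. Your quantum-potential identity and the remark that the purely time-dependent shift $\int_0^t L_F(S_\sigma,\mu_\sigma)\,d\sigma$ has vanishing spatial gradient (so it only homogenizes the Hamilton--Jacobi equation, matching the $c(t)$ cancellation in the theorem's proof) are both accurate.
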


\begin{remark} \label{factorize}{\normalfont The passage from $S$ to $\ol S= S + const.$
does not bear any physical relevance, since two wave functions
$\Psi, \tilde \Psi $ with $\tilde \Psi = e^{ i \kappa} \Psi $ for
some $\kappa \in \R$ parameterize the same physical system.
Accordingly the Schr\"odinger equation should probably rather be
understood in the sense of  $ i \hbar
\partial _t [\Psi] = - \hbar^2 /2    \Delta [\Psi] + [\Psi] \, V
$ for a flow of equivalence classes of wave functions. On the level
of representatives this amounts to the equation
\[ \exists \, \kappa(.) :\R_+ \to \R: \quad  i \hbar
\partial _t \Psi = - \hbar^2 /2    \Delta  \Psi + \Psi \, V +  i
\kappa \Psi. \]
} \end{remark}

\begin{remark}{\normalfont The $d_\W$-gradient flow on $\P(M)$ for $F$ as in
\eqref{genpot} corresponding to the overdamped limit of
\eqref{eulerlagrange} gives  a nonlinear 4th-order equation which is
sometimes called the 'Derrida-Lebowitz-Speer-Spohn' or
'quantum-drift-diffusion' equation. A rigorous treatment of it can
be found in \cite{GST}. }
\end{remark}

The usual argument for the derivation of Euler-Lagrange equations
yields the following statement. \smallskip

\begin{corollary}\label{eulerlaglem}\textit{For $V \in C^\infty (M)$ let $F: \P^\infty (M) \to \R$  defined as in (\ref{genpot}). Then any smooth local Lagrangian flow  $[0,\epsilon ] \ni t \to  \dot \mu_t\in T\P^\infty(M)$  associated to $L_F$
yields a local solution of  the Schr\"odinger equation
\begin{equation*}
 i \hbar \partial _t \Psi = - \hbar^2 /2    \Delta  \Psi + \Psi \, V
\end{equation*}
via 
\[  \Psi(t,x) = \sqrt {\mu(t,x)} e^{\frac i \hbar  \bar S(x,t) }
\]
where
\[ \bar S(x,t) = S(x,t) + \int_0^t L_F(S_\sigma,\mu_\sigma)d\sigma  \]
and $S(x,t) $ is the velocity potential of the flow $\mu$, i.e. satisfying $\int _M S d\mu =0$ and $\dot \mu_t = -\div (\nabla S_t\mu)$.
}
\end{corollary}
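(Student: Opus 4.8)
The plan is to reduce Corollary \ref{eulerlaglem} to the two results already established, namely Theorem \ref{mainthm} and Corollary \ref{schroedcor}. The only genuinely new ingredient is the claim that a Lagrangian flow for $L_F$ satisfies the Newtonian equation \eqref{eulerlagrange}; once this is in hand, Corollary \ref{schroedcor} produces the solution of the Schr\"odinger equation via the stated Madelung transform, and nothing further is needed.

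First I would recall the Lagrangian $L_F(\psi) = \frac12\norm{\psi}^2_{T_\mu\P} - F(\mu)$ for $\psi \in T_\mu\P(M)$ and set up the action functional $\mathcal A(\mu) = \int_0^\epsilon L_F(\dot\mu_t)\,dt$ over smooth curves $t\mapsto \mu_t \in \P^\infty(M)$ with fixed endpoints. A \emph{Lagrangian flow} is by definition a stationary point of $\mathcal A$, so the goal is to show that the first variation $\frac{d}{ds}\big|_{s=0}\mathcal A(\mu^s)$ vanishes for every compactly supported variation $(\mu^s_t)$ exactly when \eqref{eulerlagrange} holds. This is the standard derivation of Euler--Lagrange equations, but carried out with respect to the Wasserstein Riemannian structure rather than a flat one.

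The key steps are to compute the variation of the kinetic term and of the potential term separately. For the kinetic part $\frac12\int_0^\epsilon \norm{\dot\mu_t}^2\,dt$, I would use the compatibility of the Levi-Civita connection $\nabla^\W$ with the Wasserstein metric together with the torsion-free property to interchange the variation field and the velocity field, and then integrate by parts in $t$; this produces exactly the acceleration term $\nabla^\W_{\dot\mu}\dot\mu$ paired against the variation field, as in any Riemannian manifold. For the potential part $-\int_0^\epsilon F(\mu_t)\,dt$, the variation yields the Wasserstein gradient $-\nabla^\W F(\mu_t)$ paired against the same variation field. Setting the total first variation to zero for all variations gives precisely \eqref{eulerlagrange}. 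Having obtained \eqref{eulerlagrange}, I would invoke Corollary \ref{schroedcor} verbatim to conclude that $\Psi(t,x) = \sqrt{\mu(t,x)}\,e^{\frac{i}{\hbar}\bar S(x,t)}$ with the stated $\bar S$ solves \eqref{seq}.

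The main obstacle is the rigorous manipulation of the first variation in Otto's formal calculus: one must justify the integration by parts in time and the use of the metric compatibility and torsion-freeness of $\nabla^\W$ on the infinite-dimensional formal manifold $\P^\infty(M)$, relying on Lott's second-order calculus \cite{MR2358290}. In keeping with the formal spirit of the paper, I would treat these structures as established there and indicate that the computation parallels the finite-dimensional Riemannian case, so the derivation of \eqref{eulerlagrange} from stationarity of $\mathcal A$ is the usual one; the genuinely quantum-mechanical content then follows immediately from the preceding corollary.
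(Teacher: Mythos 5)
Your proposal is correct and matches the paper's intended argument: the paper gives no explicit proof, merely prefacing the corollary with the remark that ``the usual argument for the derivation of Euler--Lagrange equations yields the following statement,'' which is exactly the first-variation computation you carry out to obtain \eqref{eulerlagrange}, followed by an appeal to Corollary \ref{schroedcor}. You have simply made explicit what the paper leaves implicit, in the same formal Riemannian spirit.
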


 \begin{remark}\label{bems} {\normalfont
 An equivalent version of theorem \ref{mainthm} puts  $\Psi = \sqrt
\mu(x,t) e^{ \frac i \hbar  S(x,t)}$ where $t\to (-\div (\nabla S_t
\mu_t), \mu_t)$ is a Lagrangian flow for $L_F$ and  $S$ is chosen to
satisfy for all  $t\geq 0$ \[\langle S_t,\mu_t\rangle - \langle
S_0,\mu_0\rangle = \int_0^t L_F(\dot \mu_s) ds. \]}
\end{remark}

\section{\normalsize \normalfont HAMILTONIAN STRUCTURE OF THE MADELUNG FLOW ON $T\P(M)$}
In this section we show that the Madelung flow \eqref{madelungflow}
has a Hamiltonian structure w.r.t. the canonical symplectic form
induced from the Wasserstein metric tensor on the tangent bundle
$T\P(M)$. To this aim we  use the representation
\[T\P(M) = \{ -\div(\nabla f\mu) \,|\, f\in C^\infty (M) , \mu \in \P(M)\}.\]


\begin{definition}[Standard Vector Fields on $T\P(M)$]
\label{maxvectorfieldef} Each pair $(\psi, \phi) \in C^\infty(M)
\times C^\infty(M)$ induces a vector field $V_{\phi,\psi}$ on
$T\P(M)$ via
\[ V_{\psi,\phi}(-\div(\nabla f \mu)) = \dot \gamma \]
where $t\to \gamma^{\psi,\phi}(t)=\gamma(t) \in T\P(M)$ is the curve
satisfying
\begin{gather*}
 \gamma(t) = -\div(\mu(t)  \nabla (f + t\phi)) \\
\mu_t = \exp(t\nabla \psi )_* \mu
\end{gather*}
\end{definition}

Recall that the standard symplectic form  on the tangent bundle of a
Riemannian manifold is given by $\omega = d\Theta$, where the
canonical 1-form $\Theta$ is defined as
\[\Theta(X) =  \langle \xi ,\pi_*(X)\rangle_{T_{\pi \xi}}, \quad  X \in T_\xi(TM),\]
and where $\pi$ denotes the projection map $\pi: TM \to M$. 

\begin{proposition} \label{formcomputation}  Let $\omega_\W \in \Lambda^2(T\P (M))$ be the standard symplectic form associated to the Wasserstein Riemannian structure on $\P(M)$, then
\begin{equation}
 \omega_\W (V_{\psi,\phi}, V_{\tilde \psi, \tilde \phi})
(-\div(\nabla f\mu )) = \langle \nabla \psi, \nabla \tilde \phi
\rangle_\mu -\langle \nabla \tilde \psi , \nabla \phi \rangle_\mu
\label{symplecticform} \end{equation}
\end{proposition}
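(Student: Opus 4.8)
The plan is to evaluate the canonical two-form directly from $\omega_\W = d\Theta$ using Cartan's formula $d\Theta(X,Y) = X(\Theta(Y)) - Y(\Theta(X)) - \Theta([X,Y])$, applied to the standard fields $X = V_{\psi,\phi}$ and $Y = V_{\tilde\psi,\tilde\phi}$. First I would record the flow of $V_{\psi,\phi}$ in the coordinates $(\mu,f)$ that present a point of $T\P(M)$ as the tangent vector $-\div(\nabla f\mu)\in T_\mu\P(M)$: by Definition \ref{maxvectorfieldef} it is $(\mu,f)\mapsto(\exp(t\nabla\psi)_*\mu,\,f+t\phi)$, so its projection to the base is the velocity $\pi_*V_{\psi,\phi} = -\div(\nabla\psi\,\mu)$. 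Feeding this into the definition of $\Theta$ together with the Otto metric identity $\langle -\div(\nabla a\,\mu), -\div(\nabla b\,\mu)\rangle_{T_\mu\P} = \int_M\langle\nabla a,\nabla b\rangle\,d\mu$ recalled in the introduction gives the clean expression $\Theta(V_{\psi,\phi})(-\div(\nabla f\mu)) = \langle\nabla f,\nabla\psi\rangle_\mu$.

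Next I would differentiate the function $\Theta(Y) = \langle\nabla f,\nabla\tilde\psi\rangle_\mu$ along the flow of $X$. This produces two contributions: a fibre term from $f\mapsto f+t\phi$, namely $\langle\nabla\phi,\nabla\tilde\psi\rangle_\mu$, and a base term from $\mu\mapsto\exp(t\nabla\psi)_*\mu$, namely $\int_M\langle\nabla\langle\nabla f,\nabla\tilde\psi\rangle,\nabla\psi\rangle\,d\mu$. Forming $X(\Theta(Y)) - Y(\Theta(X))$ and expanding the base terms with the product rule, the symmetric Hessian-of-$f$ pieces cancel (here one uses $\Hess f(\nabla\psi,\nabla\tilde\psi) = \Hess f(\nabla\tilde\psi,\nabla\psi)$), leaving $\langle\nabla\phi,\nabla\tilde\psi\rangle_\mu - \langle\nabla\tilde\phi,\nabla\psi\rangle_\mu$ together with a residual term $\int_M\langle\nabla f,\,w\rangle\,d\mu$, where $w = \nabla_{\nabla\psi}\nabla\tilde\psi - \nabla_{\nabla\tilde\psi}\nabla\psi = [\nabla\psi,\nabla\tilde\psi]$ is the Lie bracket of the gradient fields (using torsion-freeness of the Levi-Civita connection).

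It remains to show that this residual term is exactly cancelled by $\Theta([X,Y])$. Since $\Theta$ only sees the base projection of a vector, and since $\pi_*X = -\div(\nabla\psi\,\mu)$ and $\pi_*Y = -\div(\nabla\tilde\psi\,\mu)$ depend on $\mu$ alone, the fields $X,Y$ are $\pi$-related to the genuine vector fields $\bar X,\bar Y$ on $\P(M)$ obtained by pushing $\mu$ along the gradient flows; hence $\pi_*[X,Y] = [\bar X,\bar Y]$. A short computation against the linear test functionals $\mu\mapsto\int_M h\,d\mu$ identifies $[\bar X,\bar Y] = -\div(w\,\mu)$ with the same $w$, so that $\Theta([X,Y]) = \langle -\div(\nabla f\mu), -\div(w\mu)\rangle_{T_\mu\P} = \int_M\langle\nabla f,w\rangle\,d\mu$. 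Subtracting this in Cartan's formula removes the residual term and yields \eqref{symplecticform}. As a consistency check, this cancellation is forced: $\omega_\W$ is a genuine two-form, so the answer cannot depend on the velocity potential $f$.

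The part that will require the most care is the sign and orientation bookkeeping. One must fix consistent conventions for Cartan's formula, for the tautological one-form $\Theta$, and for the orientation of $\omega = d\Theta$ (i.e.\ $dp\wedge dq$ versus $dq\wedge dp$), and match them against the sign of $[\bar X,\bar Y]$, which arises from the pushforward action of the diffeomorphism group and hence carries the anti-homomorphism sign. With the convention fixed so that these align, the surviving fibre terms assemble into $\langle\nabla\psi,\nabla\tilde\phi\rangle_\mu - \langle\nabla\tilde\psi,\nabla\phi\rangle_\mu$. A secondary technical point is that the manifold identities (product rule for $\nabla\langle\cdot,\cdot\rangle$, symmetry of $\Hess f$, and $\nabla_uv-\nabla_vu=[u,v]$) should be carried out invariantly rather than in coordinates, but these are standard and routine.
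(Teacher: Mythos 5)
Your proof follows essentially the same route as the paper's: Cartan's formula for $d\Theta$ evaluated on the standard vector fields, the identity $\Theta(V_{\tilde\psi,\tilde\phi})(-\div(\nabla f\mu)) = \langle\nabla f,\nabla\tilde\psi\rangle_\mu$, and the cancellation of the residual base terms against $\Theta([V_{\psi,\phi},V_{\tilde\psi,\tilde\phi}]) = \langle\nabla f,[\nabla\psi,\nabla\tilde\psi]\rangle_\mu$ --- an identity the paper dismisses as ``easy to check'' and which you verify explicitly via symmetry of $\Hess f$ and torsion-freeness, along with a cleaner justification (via $\pi$-relatedness and linear test functionals) of the bracket projection. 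The sign-convention issue you flag ($\omega = \pm d\Theta$, i.e.\ whether the raw Cartan computation yields $\langle\nabla\phi,\nabla\tilde\psi\rangle_\mu - \langle\nabla\tilde\phi,\nabla\psi\rangle_\mu$ or its negative) is silently present in the paper's own computation too, so your handling of it is no weaker than the original.
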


\begin{proof} We use the formula
\begin{equation}
 \omega_\W (V_{\psi,\phi}, V_{\tilde \psi, \tilde \phi}) =
V_{\psi,\phi} \Theta (V_{\tilde \psi, \tilde \phi}) - V_{\tilde \psi
,\tilde \phi} \Theta (V_{ \psi, \phi}) - \Theta ([ V_ {\psi, \phi},
V_ {\tilde \psi,\tilde  \phi}]), \label{lietyp} \end{equation}
 where
$[ V_ {\psi, \phi}, V_ {\tilde \psi,\tilde  \phi}]$ denotes the
Lie-bracket of the vector fields $ V_ {\psi, \phi}$ and $ V_ {\tilde
\psi,\tilde \phi}$. From the definition of $\Theta$ we obtain \[
\Theta  ( V_ {\tilde \psi,\tilde  \phi}) (-\div(\nabla f \mu )) =
\langle \nabla f, \nabla \tilde \psi \rangle_\mu. \]
 Hence
\begin{equation}
\begin{split}
V_ { \psi,  \phi}(\Theta  ( V_ {\tilde \psi,\tilde  \phi}) )  & = \frac d{dt}_{|t=0} \Theta  ( V_ {\tilde \psi,\tilde  \phi}) ( \gamma^{\psi,\phi}(t)) \\
& = \frac d{dt}_{|t=0}  \langle \nabla (f+t \phi), \nabla \tilde \psi \rangle_{\mu(t)}  \\
 & =  \langle \nabla \phi, \nabla \tilde \psi \rangle_\mu - \int _M \nabla f \cdot  \nabla \tilde \psi  (-\div \nabla \psi \mu ) dx \\
&   =  \langle \nabla \phi, \nabla \tilde \psi \rangle_\mu + \int _M
\nabla (\nabla f \cdot  \nabla \tilde \psi )   \nabla \psi d\mu
\end{split} \label{firstpart}
\end{equation}
Next, since $\Theta$ measures tangential variations only one gets
that \begin{equation}
 \Theta( [ V_ {\psi, \phi}, V_ {\tilde
\psi,\tilde \phi}])(-\div(\nabla f \mu)) = \langle \nabla f ,
[\nabla \psi, \nabla \tilde \psi ]\rangle_\mu. \label{secpart}
\end{equation} Finally, it is easy to check that
\[\int _M \nabla (\nabla f \cdot  \nabla \tilde \psi )   \nabla \psi d\mu -\int _M \nabla (\nabla f \cdot  \nabla \psi )   \nabla \tilde \psi d\mu  - \langle \nabla f , [\nabla \psi, \nabla \tilde \psi ]\rangle_\mu  =0,\]
which together with \eqref{lietyp}, \eqref{firstpart} and
\eqref{secpart} establishes the claim. \bbox
\end{proof}

\begin{remark} {\normalfont Proposition \ref{formcomputation}   shows that
$\omega_\W$  is the lift of the standard symplectic form on $TM$ to
$T\P(M)$. This corresponds to the result   in \cite[section
6]{MR2358290}, which however is less explicit  than formula
\eqref{symplecticform}. }
\end{remark}

Using the
 the Riemannian inner product in each fiber of $T\P(M)$ the Hamiltonian associated with $L_F$ is
\begin{equation}
 H_F: T\P(M) \to \R; \quad H_F(-\div (\nabla f \mu)) = \frac 1 2 \int_M |\nabla f|^2 d\mu + F(\mu)
 \label{hamiltonian}
\end{equation}

\begin{proposition}{\it  Let $X_F$ denote the Hamiltonian vector field $X_F$  induced on $T\P(M)$
from $H_F$ and $\omega_\W$, then 
\[ X_F(-\div (\nabla f \mu)) = V_{f, - (\frac 1 2 |\nabla f|^2 + V +
\frac {h^2}{8 } (|\nabla \ln \mu |^2 -2 \frac {\Delta \mu}{\mu}))} (
-\div (\nabla f  \mu ))\] }
\end{proposition}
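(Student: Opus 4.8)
The plan is to read off $X_F$ from its defining relation $\iota_{X_F}\omega_\W = dH_F$, using the explicit pairing formula of Proposition~\ref{formcomputation}. The structural point that makes this tractable is that the standard vector fields of Definition~\ref{maxvectorfieldef} realise every tangent direction of $T\P(M)$ at a given point: in $V_{\psi,\phi}$ the function $\psi$ moves the base measure along the flow $\exp(t\nabla\psi)$, while $\phi$ shifts the fibre potential $f\mapsto f+t\phi$. Hence at $-\div(\nabla f\mu)$ I may write $X_F=V_{a,b}$ for suitable (point-dependent) $a,b\in C^\infty(M)$, and then determine $\nabla a,\nabla b$ by testing against an arbitrary $V_{\tilde\psi,\tilde\phi}$.

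The substantive step is the computation of $dH_F$ on a standard vector field. Along the curve $\gamma(t)=-\div(\mu_t\nabla(f+t\tilde\phi))$ with $\mu_t=\exp(t\nabla\tilde\psi)_*\mu$ that defines $V_{\tilde\psi,\tilde\phi}$, I would differentiate
\[ H_F(\gamma(t)) = \tfrac12\int_M |\nabla(f+t\tilde\phi)|^2\,d\mu_t + F(\mu_t). \]
At $t=0$ three contributions appear: the variation of the fibre potential yields $\langle\nabla f,\nabla\tilde\phi\rangle_\mu$; the variation of the measure inside the kinetic term, using $\dot\mu_0=-\div(\nabla\tilde\psi\,\mu)$ and one integration by parts, yields $\langle\nabla(\tfrac12|\nabla f|^2),\nabla\tilde\psi\rangle_\mu$; and the potential energy contributes $\tfrac{d}{dt}\big|_0 F(\mu_t)=\langle\nabla F'(\mu),\nabla\tilde\psi\rangle_\mu$, where $F'(\mu)$ is the $L^2$-first variation of $F$. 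The only genuinely non-trivial computation here is the first variation of the Fisher information, which after integrating by parts twice gives $I'(\mu)=|\nabla\ln\mu|^2-\tfrac{2}{\mu}\Delta\mu$, so that $F'(\mu)=V+\tfrac{\hbar^2}{8}(|\nabla\ln\mu|^2-\tfrac{2}{\mu}\Delta\mu)$. Collecting terms,
\[ dH_F(V_{\tilde\psi,\tilde\phi}) = \langle\nabla f,\nabla\tilde\phi\rangle_\mu + \big\langle\nabla(\tfrac12|\nabla f|^2 + F'(\mu)),\nabla\tilde\psi\big\rangle_\mu. \]

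It then remains to equate this with $\omega_\W(V_{a,b},V_{\tilde\psi,\tilde\phi})=\langle\nabla a,\nabla\tilde\phi\rangle_\mu-\langle\nabla\tilde\psi,\nabla b\rangle_\mu$ from Proposition~\ref{formcomputation}. Because $\tilde\psi$ and $\tilde\phi$ range independently over $C^\infty(M)$ and the pairing $\langle\nabla\cdot,\nabla\cdot\rangle_\mu$ is non-degenerate for fully supported $\mu$, setting $\tilde\psi=0$ forces $\nabla a=\nabla f$ and setting $\tilde\phi=0$ forces $\nabla b=-\nabla(\tfrac12|\nabla f|^2+F'(\mu))$. Since $V_{a,b}$ depends on $a,b$ only through their gradients, the additive constants are irrelevant, and we conclude $a=f$ and $b=-(\tfrac12|\nabla f|^2+V+\tfrac{\hbar^2}{8}(|\nabla\ln\mu|^2-\tfrac{2}{\mu}\Delta\mu))$, which is exactly the asserted identity. (As a consistency check, the flow of this $V_{a,b}$ moves the potential by $\dot f=b$, reproducing the Hamilton--Jacobi equation of \eqref{madelungflow}, and thereby fixing the sign convention $\iota_{X_F}\omega_\W=dH_F$.)

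I expect the main obstacle to be the bookkeeping in the evaluation of $dH_F$: correctly separating the simultaneous variation of the fibre potential $f+t\tilde\phi$ and of the base measure $\mu_t$ in the kinetic term, and carrying out the double integration by parts for $I'(\mu)$ without sign errors. The remaining ingredients---that the standard vector fields span each tangent space and that the matching of coefficients against arbitrary $(\tilde\psi,\tilde\phi)$ is therefore legitimate---are easy but must be stated explicitly to justify solving for $X_F$ in this way.
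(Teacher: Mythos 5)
Your proposal is correct and follows essentially the same route as the paper: compute the directional derivative of $H_F$ along an arbitrary standard vector field $V_{\tilde\psi,\tilde\phi}$ (splitting into the kinetic, classical-potential, and Fisher-information contributions), then match against the pairing formula of Proposition~\ref{formcomputation} to read off $X_F$ as a standard vector field. The only difference is presentational: you make explicit the spanning and non-degeneracy argument that justifies solving for the coefficients (setting $\tilde\psi=0$, then $\tilde\phi=0$), a step the paper leaves implicit in its closing phrase ``from this and formula \eqref{symplecticform} the claim follows.''
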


\begin{proof}
Fix  $\psi, \phi \in C^\infty(M)$ and let  $V_{\psi, \phi}(.)$
denote the corresponding standard vector field. Let $t \to \gamma
(t) = -\div((\nabla f +t \phi) \mu_t )$, where $\mu_t =\exp(t \nabla
\psi )_*\mu$,  denote the corresponding curve on $T\P(M)$, then
\begin{align*} V_{\psi,\phi}(H_F)(-\div(\nabla f \mu)) & = \partial
_{t | t=0} H_F(\gamma(t))\\
& =  \partial _{t | t=0}  \bigl( \frac 1 2 \int_M |\nabla
(f+t\phi)|^2 d\mu_t + \langle V,\mu_t\rangle + \frac {h^2} 8 I
(\mu_t)\bigr)\\ & = I + II + III,
\end{align*}
where
\begin{align*}
I &= \int_M \nabla f \nabla \phi d\mu + \frac 1 2 \int_M |\nabla f
|^2 (-\div(\nabla \psi \mu )) \\
& = \langle \nabla f ,\nabla \phi \rangle_\mu + \langle \nabla \psi
, \nabla (\frac 1 2 |\nabla f|^2) \rangle \end{align*}
\begin{align*}
II &= \int_M V (-\div(\nabla \psi \mu ))  = \langle \nabla V ,
\nabla \psi \rangle_\mu \end{align*}
and
\begin{align*}
III &= \frac {\hbar ^2 }{8} \int_M 2 \nabla \ln \mu_t \nabla (\frac
{-\div(\nabla \psi \mu )}{\mu}) d\mu + \frac {\hbar ^2 }{8} \int_M
|\nabla \ln\mu |^2 (-\div(\nabla \psi \mu)) \\
& = \frac {\hbar ^2 } 8 \bigl( \langle \nabla \psi, \nabla ( - \frac
{2 \Delta \mu }{\mu})\rangle_\mu + \langle  \nabla \psi , \nabla
|\nabla \ln \mu|^2 \rangle_\mu \bigr)
\end{align*}

Hence, collecting terms
\[ V_{\psi,\phi}(H_F)(-\div(\nabla f \mu)) = \langle \nabla f,
\nabla \phi \rangle_\mu - \langle \nabla(-(\frac 1 2 |\nabla f|^2 +
V + \frac {\hbar ^2 } 8 (|\nabla \ln \mu |^2 -2  \frac {\Delta \mu
}{\mu}))), \nabla \psi\rangle_\mu.\]

From this and formula \eqref{symplecticform} the claim follows.
\bbox
\end{proof}

\begin{corollary} The pair $t \to (S_t,\mu_t) \in C^\infty(M)\times
\P(M)$ solves the Madelung flow equation \eqref{madelungflow} if and
only if $t \to -\div(\nabla S_t \mu_t) \in T\P(M)$ is an integral
curve for $X_F$.
\end{corollary}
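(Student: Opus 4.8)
The plan is to unwind the definition of an integral curve and read off the two scalar equations of \eqref{madelungflow} by comparing the base-point and fibre components of a single vector-field identity. Recall that $t\mapsto c(t)\in T\P(M)$ is an integral curve for $X_F$ exactly when $\dot c(t)=X_F(c(t))$ for all $t$, and that by the preceding computation of the Hamiltonian vector field one has
\[
X_F(-\div(\nabla S_t\mu_t))=V_{S_t,\,\phi_t}(-\div(\nabla S_t\mu_t)),\qquad
\phi_t:=-\Bigl(\tfrac12|\nabla S_t|^2+V+\tfrac{\hbar^2}{8}\bigl(|\nabla\ln\mu_t|^2-2\tfrac{\Delta\mu_t}{\mu_t}\bigr)\Bigr).
\]
So I set $c(t)=-\div(\nabla S_t\mu_t)$, compute its velocity $\dot c(t)$ directly, and match it against the standard vector field $V_{S_t,\phi_t}$ on the right.

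First I would put $\dot c(t)$ into the normal form of the curves $\gamma^{\psi,\phi}$ from Definition \ref{maxvectorfieldef}. Fixing a time $t_0$, the base curve satisfies $\mu_t=\exp((t-t_0)\nabla\psi)_*\mu_{t_0}+o(t-t_0)$ where $\nabla\psi$ is the velocity field of $t\mapsto\mu_t$ at $t_0$, i.e. $\dot\mu_{t_0}=-\div(\nabla\psi\,\mu_{t_0})$, while the velocity potential expands as $S_t=S_{t_0}+(t-t_0)\,\partial_t S_{t_0}+o(t-t_0)$. Since a tangent vector to $T\P(M)$ depends only on the $1$-jet of the curve, this identifies
\[
\dot c(t_0)=V_{\psi,\,\partial_t S_{t_0}}\bigl(c(t_0)\bigr),
\]
and the integral-curve condition turns into the identity $V_{\psi,\partial_t S_{t_0}}(c(t_0))=V_{S_{t_0},\phi_{t_0}}(c(t_0))$ between two standard vector fields at the same point.

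The second step is to invert this identity componentwise. At a fully supported base measure the assignment $(\psi,\phi)\mapsto V_{\psi,\phi}$ is injective modulo additive constants: the base component $-\div(\nabla\psi\,\mu)$ determines $\psi$ up to a constant (integrate by parts to get $\nabla\psi=0$ from $-\div(\nabla\psi\,\mu)=0$), and by construction $V_{\psi,\phi}$ sees $\phi$ only through $\nabla\phi$. Matching base components therefore forces $\dot\mu_{t_0}=-\div(\nabla S_{t_0}\mu_{t_0})$, which is precisely the continuity equation in \eqref{madelungflow}; matching fibre components forces $\nabla\partial_t S_{t_0}=\nabla\phi_{t_0}$, i.e. the generalized Hamilton--Jacobi equation in \eqref{madelungflow} up to a spatial constant. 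Running this for every $t_0$ gives both equations, and since each matching step is an equivalence the converse implication comes for free.

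The step I expect to require the most care is the fibre matching. Because the representative $-\div(\nabla S_t\mu_t)$ is insensitive to adding a constant to $S_t$, a point of $T\P(M)$ retains only $\nabla S_t$, so the integral-curve condition pins $S_t$ down merely up to a time-dependent constant and the Hamilton--Jacobi equation is recovered only modulo a function $c(t)$. This is exactly the phase ambiguity of Remark \ref{factorize}; imposing the normalization $\langle S_t,\mu_t\rangle=0$ as in Theorem \ref{mainthm} fixes the constant and makes both directions of the equivalence hold verbatim. The continuity equation, by contrast, involves only $\nabla S_t$ and is therefore unaffected by this freedom.
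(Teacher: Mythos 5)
Your core reduction is correct and is essentially the argument the paper leaves implicit: the corollary is stated there without proof, as an immediate consequence of the preceding proposition, and unwinding the integral-curve condition exactly as you do --- matching base components to obtain the continuity equation and fibre components to obtain the Hamilton--Jacobi equation up to a spatial constant, using that $(\psi,\phi)\mapsto V_{\psi,\phi}$ is injective modulo additive constants over a fully supported $\mu$ --- is the intended proof.

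However, your resolution of the time-dependent constant in the last paragraph is wrong, and this is exactly the point the backward direction needs if the "iff" is to be read verbatim. Imposing $\langle S_t,\mu_t\rangle = 0$ does \emph{not} make the Hamilton--Jacobi equation exact: by the computation in the proof of Theorem \ref{mainthm}, that normalization forces the constant to be $c(t) = -L_F(S_t,\mu_t)$, which does not vanish in general. Equivalently (Remark \ref{bems}), along any exact solution of \eqref{madelungflow} one has $\partial_t \langle S_t,\mu_t\rangle = L_F(\dot\mu_t)$, so an exact Madelung solution is generically \emph{incompatible} with the gauge $\langle S_t,\mu_t\rangle \equiv 0$; with your normalization the claimed equivalence would fail for every flow along which $L_F \not\equiv 0$. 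The correct choice of representative is the one used in Theorem \ref{mainthm} and Remark \ref{bems}: take the zero-mean velocity potential $S_t^{(0)}$ of the integral curve and set $\bar S_t = S_t^{(0)} + \int_0^t L_F(S_\sigma^{(0)},\mu_\sigma)\,d\sigma$; then $(\bar S_t,\mu_t)$ solves \eqref{madelungflow} exactly, and it is this shifted potential --- not the zero-mean one --- that turns your "Hamilton--Jacobi modulo $c(t)$" into the exact equation. The remainder of your argument (the jet computation, the componentwise matching, and the observation that the continuity equation is insensitive to the gauge freedom) stands as written.
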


\section{\normalsize \normalfont THE MADELUNG TRANSFORM AS A SYMPLECTIC SUBMERSION}

In this section we prove that the two equations
\eqref{eulerlagrange} and \eqref{seq} are related via a symplectic
submersion.
\begin{definition} \label{submersiondef} A   smooth map  $s: (M,\omega) \to (N,\eta)$ between
two symplectic manifolds is called a symplectic submersion if its
differential $s_*: TM \to TN$ is surjective and satisfies
$\eta(s_*X, s_*Y) = \omega(X,Y)$ for all $X,Y \in
TM$.\end{definition}

Note that this definition implies in particular that the map $s$
itself is surjective. The following proposition is easily verified.
Its meaning is that in order to solve a Hamiltonian system on $N$ we
may look for solutions for the lifted Hamiltonian $g \circ s$ on the
larger state space $M$ and project them via $s$ back again to $N$.

\begin{proposition} \label{liftprop} Let $s: (M,\omega) \to (N,\eta)$ be a symplectic
submersion and let $f \in C^\infty(M)$ and $g\in C^\infty(N)$ with
$g\circ s = f$, then $s$ maps Hamiltonian flows associated to $f$ on
$(M, \omega)$ to Hamiltonian flows associated to $g$ on $(N, \eta)$.
\end{proposition}

Let now $\C(M)= C^\infty(M;\mathbb C)$ denote the linear space of
smooth complex valued functions on $M$. Identifying as usual the
tangent space over an element $\Psi \in \C$ with  $\C$, $T\C$ is
naturally equipped with the symplectic form
\[ \omega_\C(F, G) = - 2\int _M \Im(F\cdot  \ol G) (x)
dx.\] It is a well-known  fact that the Schr\"odinger equation
$\eqref{seq}$ is the Hamiltonian flow induced from the symplectic
form $\hbar \cdot \omega_C$ and the Hamiltonian function on $\C$
\[H_S(\Psi)
= \frac {\hbar^2 } 2 \int_{M} |\nabla \Psi |^2 dx + \int_M
|\Psi(x)|^2 V(x) dx.\]

Let $\C_*(M)$ denote the subset of nowhere vanishing functions from
$\C$ such that $\int_ M |\Psi(x)|^2 dx =1$ and note that $\C_*(M)$
 is invariant under the Schr\"odinger flow.

\smallskip Assuming simple connectedness of $M$ implies (via a
standard lifting theorem of algebraic topology)  that each function
$\Psi \in \C_*$ admits a decomposition $\Psi=  |\Psi| e^{\frac i
\hbar S}$, where the smooth field $S: M \to \R$ is uniquely defined
up to an additive constant $\hbar 2\pi k$, $k\in \N$. Hence we may
define a the \textit{Madelung transform}\begin{equation}  \sigma:
\C_*(M) \to TP(M), \qquad
 \sigma(\Psi) = - \div(
|\Psi|^2 \nabla S  ).\label{madtrans}
\end{equation}

For the next theorem recall that in our definition of $T\P(M)$ we
assume that the supporting measures are smooth and strictly positive
on $M$.

\begin{theorem} \label{projectthm} Let $M$ be simply connected. Then the  Madelung transform
\[\sigma: \C_*(M) \to T\P(M), \qquad \sigma( |\Psi| e^{\frac {i}{\hbar }
S}) = -\div(|\Psi|^2 \nabla S) \]
  defines  symplectic submersion from  $(\C_*(M),\hbar \cdot \omega_\C)$ to
  $(T\P(M),\omega_\W)$ which preserves the Hamiltonian, i.e.\ \[H_S =
  H_F\circ \sigma.\]\end{theorem}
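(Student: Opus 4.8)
The plan is to verify in turn the three defining properties of a Hamiltonian-preserving symplectic submersion: preservation of the Hamiltonian, surjectivity of $\sigma_*$, and the symplectic identity $\omega_\W(\sigma_*X,\sigma_*Y)=\hbar\,\omega_\C(X,Y)$. Throughout I would work in Madelung coordinates, writing $\Psi=a\,e^{\frac i\hbar S}$ with $a=\sqrt\mu>0$, so that $\sigma(\Psi)=-\div(\mu\nabla S)$ is the element of $T\P(M)$ with base measure $\mu$ and velocity potential $f=S$.

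First, the Hamiltonian identity is a direct computation. Differentiating the polar form gives $\nabla\Psi=(\nabla a+\frac i\hbar a\nabla S)e^{\frac i\hbar S}$, hence $|\nabla\Psi|^2=|\nabla a|^2+\frac1{\hbar^2}a^2|\nabla S|^2$. Substituting into $H_S$ and using $a=\sqrt\mu$, so that $|\nabla a|^2=\frac{|\nabla\mu|^2}{4\mu}=\frac14|\nabla\ln\mu|^2\mu$, I would identify $\frac{\hbar^2}2\int_M|\nabla a|^2\,dx=\frac{\hbar^2}8 I(\mu)$, the kinetic term $\frac12\int_M|\nabla S|^2\,d\mu$, and the potential $\int_M V\,d\mu$. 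Comparing with \eqref{hamiltonian} and \eqref{genpot} then yields $H_S=H_F\circ\sigma$ at once.

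The heart of the argument is to express $\sigma_*$ through the standard vector fields of Definition \ref{maxvectorfieldef}. I would represent a tangent vector at $\Psi$ by a variation $\Psi_t=a_t e^{\frac i\hbar S_t}$, so that $F=\dot\Psi=(\dot a+\frac i\hbar a\dot S)e^{\frac i\hbar S}$, and compute $\partial_t\sigma(\Psi_t)|_{t=0}=-\div(\dot\mu\,\nabla S)-\div(\mu\nabla\dot S)$ with $\dot\mu=2a\dot a$. Matching this against the velocity of $V_{\psi,\phi}$ reveals the key structural fact: the phase variation $\dot S$ plays the role of the fibre parameter $\phi$, while the amplitude variation $\dot a$ enters only through the base-measure motion, i.e.\ $\psi$ must solve the elliptic equation $-\div(\mu\nabla\psi)=2a\dot a$. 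Thus $\sigma_*F=V_{\psi,\dot S}(-\div(\mu\nabla S))$. Surjectivity is then clear: given any pair $(\psi,\phi)$ one recovers a preimage by setting $\dot S=\phi$ and $\dot a=-\div(\mu\nabla\psi)/(2\sqrt\mu)$, which is smooth precisely because $a=\sqrt\mu$ is nowhere vanishing on $\C_*(M)$, and the normalization constraint is automatically respected since $-\div(\mu\nabla\psi)$ integrates to zero.

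With this identification in hand the symplectic identity becomes a short calculation. On one side, expanding $F\ol G$ for a second vector $G=(\dot b+\frac i\hbar a\dot T)e^{\frac i\hbar S}$ gives $\Im(F\ol G)=\frac1\hbar a(\dot S\dot b-\dot a\dot T)$, whence $\hbar\,\omega_\C(F,G)=-2\int_M a(\dot S\dot b-\dot a\dot T)\,dx$. On the other side, Proposition \ref{formcomputation} gives $\omega_\W(\sigma_*F,\sigma_*G)=\langle\nabla\psi,\nabla\dot T\rangle_\mu-\langle\nabla\tilde\psi,\nabla\dot S\rangle_\mu$, and integrating by parts while inserting $-\div(\mu\nabla\psi)=2a\dot a$ and $-\div(\mu\nabla\tilde\psi)=2a\dot b$ turns both inner products into $2\int_M a\dot a\dot T\,dx$ and $2\int_M a\dot b\dot S\,dx$ respectively, so the two expressions coincide. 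The main obstacle I anticipate is precisely the bookkeeping of the previous step, namely correctly locating the amplitude and phase variations as base and fibre directions and getting the single elliptic equation right; once $\sigma_*$ is pinned down, both the factor $\hbar$ and the antisymmetric structure fall out of Proposition \ref{formcomputation} automatically.
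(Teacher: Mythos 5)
Your proof is correct, but it is organized quite differently from the paper's, and the difference is worth recording. The paper never computes $\sigma_*$ directly: it fixes a base point $0\in M$ and builds explicit right inverses $\tau^{(r)}:T\P(M)\to\C_*(M)$, $-\div(\nabla S\,\mu)\mapsto\sqrt\mu\,e^{\frac{i}{\hbar}(S-(S(0)-r))}$, satisfying $\sigma\circ\tau=\mathop{\mathrm{Id}}$; surjectivity of $\sigma_*$ then follows formally, and the symplectic property is checked by pushing the standard fields $V_{\psi,\phi}$ forward through $\tau$ and verifying $\tau^*\omega_\C=\frac1\hbar\,\omega_\W$. You instead parametrize the source: writing an arbitrary tangent vector at $\Psi$ as $F=(\dot a+\frac i\hbar a\dot S)e^{\frac i\hbar S}$, you identify $\sigma_*F=V_{\psi,\dot S}$ where $\psi$ solves $-\div(\mu\nabla\psi)=2a\dot a$ via the Green operator (solvable precisely because tangency to $\C_*(M)$ forces $\int_M 2a\dot a\,dx=0$ --- a compatibility condition you should state explicitly), and then check the identity pairwise. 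The integrations by parts are the same in both routes, and both rest on Proposition \ref{formcomputation}; the substantive difference is the set of vectors on which the symplectic identity gets verified. The paper's check covers only the image of $\tau_*$, which is a proper codimension-one subspace of $T_\Psi\C_*(M)$: vectors there have phase variation vanishing at $0$, so the global-phase direction $i\Psi$ is missed, whereas Definition \ref{submersiondef} demands the identity for \emph{all} tangent vectors. Strictly, the paper's ``it suffices'' step needs the additional (easy) remark that $i\Psi$ lies in the kernel of $\sigma_*$ and that $\omega_\C(i\Psi,\cdot)$ vanishes on $T_\Psi\C_*(M)$ by the normalization constraint. Your parametrization covers the whole tangent space --- the direction $i\Psi$ is just $\dot a=0$, $\dot S\equiv\hbar$ --- so your argument closes this gap automatically; what you pay is the appeal to the elliptic solvability defining $\psi$, which the section-based route avoids. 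The Hamiltonian computation $H_S=H_F\circ\sigma$ is identical in both.
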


\begin{remark}{\normalfont Together with proposition \ref{liftprop}
this result presents the Schr\"odinger equation \eqref{seq} as a
symplectic lifting of Newton's law on Wasserstein space
\eqref{eulerlagrange} to the larger space $\C_*(M)$, and which can
be solved much easier because it is  linear. Projecting the solution
down to $T\P(M)$ via $\sigma$ yields the desired solution to
\eqref{eulerlagrange}. Going in inverse direction from
\eqref{eulerlagrange} to \eqref{seq} requires a scalar correction
term in the phase field, c.f. remark \ref{factorize}.}
\end{remark}

\textit{Proof of theorem \ref{projectthm}.} Clearly,
$\sigma(C_*(M))= T\P(M)$. To see that  $\sigma: C_*(M) \to T\P(M)$
is a submersion fix a reference point $0\in M$, then for each $r\in
[0, 2\pi\hbar [$ the map $\tau=\tau^{(r)}$
\[\tau  : T\P(M)  \to
C_*(M), \qquad -\div(\nabla  S \mu) \to \sqrt {\mu} e^{\frac i \hbar
(S-(S(0)-r)) },\] is a bijection from $T\P(M)$ to the subset $\{
\Psi \in C_*, \frac{\Psi}{|\Psi|}(0) = e^{\frac i {\hbar} r}\}$
which satisfies $\sigma \circ \tau = \mathop{\normalfont
Id}_{T\P(M)}$. This proves that the differential $s_*$ of $s$ is
surjective.

\smallskip To prove that $\sigma$ is symplectic  let  $\Psi = \sqrt
\mu e^{\frac i \hbar f}\in \C_*$ with $f(0)=r \in [0,2\pi \hbar[$
and let $\eta = - \div( \mu \nabla f) = \sigma(\Psi)\in T\P(M)$.
Again due to  the  identity $\sigma \circ \tau = \mathop{\normalfont
Id}_{T\P(M)}$  it suffices to prove that $\tau^* \omega _\C =
1/\hbar \cdot \omega_\W$ on $T_\eta (T \P(M))$. Since the set $
\{V_{\psi,\phi}(-\div(\mu \nabla f))\,| \psi,\phi \in C^\infty
(M)\}$ spans the full tangent space $T_\eta (T \P(M))$, it remains
to  verify
\[\omega_\C ( \tau_* V_{\psi,\phi}, \tau_* V_{\tilde \psi,\tilde
\phi}) = \frac 1 \hbar \omega_\W(  V_{\psi,\phi},   V_{\psi,\phi})\]
for all $\psi, \phi, \tilde \psi, \tilde \phi  \in C^\infty(M)$. By
definition of $V_{\psi,\phi}$ and $\tau=\tau^{(r)}$ for {  $\mu_t :=
\exp (t \nabla \psi)_*(\mu)$ and  $c(t) :=  f(0)+ t\phi(0)-r$}
\begin{align*} \tau_*
V_{\psi,\phi}  & =
\partial _{t | t=0} \sqrt{\mu_t } e^{\frac i \hbar  (f +t\phi- c(t))}
  = e^{\frac i \hbar f} \left (
\frac  1 {2\sqrt \mu } ( -\div (\nabla \psi \mu ) ) + \sqrt
{\mu}\frac i \hbar  (\phi - \dot c ) \right)
\end{align*}
Hence
\begin{align*}
\omega_\C(\tau_* V_{\psi,\phi}, \tau_* V_{\tilde \psi,\tilde \phi})
& = - 2 \int_M \bigl ( \frac 1 {2\sqrt \mu}  (-\div (\nabla \psi
\mu)) \cdot ( - \sqrt \mu \frac 1 \hbar  (\tilde \phi
  + \dot{\tilde c})) \\  & \phantom{= - 2 \int_M \bigl (} + \sqrt \mu \frac 1 \hbar  (\phi + \dot c ) \cdot
\frac  1 {2  \sqrt \mu }
(-\div ( \nabla \tilde \psi \mu )) \bigr) dx \\
& = \frac 1 \hbar  \bigl( \langle  \nabla \psi  ,  \nabla \tilde
\phi \rangle_\mu   - \langle \nabla \phi ,  \nabla \tilde \psi_\mu
\rangle \bigr) = \frac 1 \hbar  \omega_\W (V_{\psi,\phi},V_{\tilde
\psi,\tilde\phi})
\end{align*}

Finally, for $\Psi = \tau (-(\div \nabla f \mu))$,   $\nabla \Psi  =
\sqrt {\mu } e^{\frac i \hbar f} ( \frac 1 2 \nabla \ln \mu   +
\frac i{\hbar} \nabla f)$ such that
\[ \frac {\hbar^2  }{2}\int_M |\nabla \Psi|^2 = \frac 1 2  \int_M |\nabla f |^2 d\mu +
\frac{\hbar^2 }{8}I(\mu)\] and $\int |\Psi (x)|^2 V (x) dx = \langle
V, \mu\rangle$ which establishes the third claim $H_S = H_F\circ
\sigma$ of the theorem. \bbox


\section{\normalsize \normalfont APPENDIX - BASIC FORMAL RIEMANNIAN
CALCULUS ON $\P(M)$}\label{append} Let $\P_2(M)$ denote the set of
Borel probability measures $\mu$ on a smooth closed finite
dimensional Riemannian manifold  $(M,g)$  having finite second
moment $\int_M d^2(o,x) \mu(dx) < \infty$. As argued in
\cite{MR2358290} the subsequent calculations  make strict
mathematical sense on the $d_\W$-dense subset of smooth fully
supported probabilities $\P^\infty(M)\subset \P_2(M)$ which shall
often be identified with their corresponding density $\mu
\stackrel {\wedge}{=} d\mu/dx$. \smallskip \smallskip

 \textit{Vector Fields on $\P(M)$ and Velocity Potentials.}
 \smallskip

 A function $\phi \in \C_c^\infty(M)$ induces a flow on $\P(M)$ via push forward
\[   t \to \mu_t = (\Phi^{\nabla \phi }_t) _* \mu_0,
\]
where $t \to \Phi_t $ is the local flow of difformorphisms on $M$
induced from the vector field $\nabla \phi \in \Gamma(M)$ starting
from $\Phi_0 = \rm{Id}_M$. The continuity equation yields the
infinitesimal variation of $\mu \in \P(M)$  as
\[  \dot \mu = \partial_{t|t=0}\mu_t = - \div (\nabla \phi \mu) \in T_\mu(\P).
\]
Hence the function $\phi$ induces a vector field $V_\phi \in \Gamma(\P(M))$ by
\[  V_\phi (\mu) = -\div(\nabla \phi \mu),
\]
acting on smooth functionals $F: \P(M) \to \R$ via
\[
 V_\phi(F)(\mu) = \partial _{\epsilon | \epsilon =0 } F(\mu- \epsilon \div(\nabla \phi \mu)) = \partial _{t | t =0} F((\Phi^{\nabla \phi}_t)_*\mu)
\]
with Riemannian norm
\[\norm{ V_\phi(\mu)}_{T_\mu\P} ^2 = \int_M |\nabla \phi|^2(x) \mu(dx).\]
Conversely, each smooth variation $\psi \in T_\mu(\P)$ can be identified with
\[  \psi = - V_\phi(\mu) \quad \mbox{ with } \phi = G_\mu \psi,\]
 where $G_\mu$ is the Green operator for $\Delta^\mu : \phi \to -\div(\mu \nabla \phi )$ on $L^2_0(M, dx)=L^2_0(M, dx)\cap\{ \langle f,dx\rangle =0\}$. Hence, for each $\psi \in T_\mu \P$ there exists a unique $\phi \in \C^\infty \cap L^2(M, dx)$ such that
\[ \psi = - \div (\mu \nabla \phi ) \mbox { and } \langle \phi, \mu \rangle =0,
\]
which we call  velocity potential  for $\psi \in T_\mu\P(M)$.

\smallskip

\textit{Riemannian Gradient on $\P(M)$.} \smallskip

 The Riemannian gradient of
a smooth functional $F : {\rm Dom}(F) \subset \P(M) \to \R$ is
computed to be
\[  \nabla ^\W F_{|\mu} = - \Delta ^\mu (DF_{|\mu}),
\]
where $x\to DF_{|\mu}(x)$ is the $L^2(M, dx)$-Frechet-derivative of $F$ in $\mu$, which is defined through the relation
\[
 \partial _{\epsilon|\epsilon =0} F(\mu+ \epsilon \xi) =\int_{M}  DF_{\mu}(x)\xi(x) dx,
\]
for all $\xi$ chosen from a suitable dense set of test functions in $L^2(M,dx)$.
The following  examples are easily obtained.
 \[
\begin{array}{ll}
   F(\mu) = \int_M \phi(x) \mu(dx),  & \nabla^\W F_{|\mu} = V_\phi(\mu)= -\div(\nabla \phi \mu)\\
F(\mu) = \int_M \mu \log \mu  dx,  & \nabla^\W F_{|\mu} = -\div(\mu \nabla \log \mu)= - \Delta \mu \\
  F(\mu) = \int_M  |\nabla \ln\mu|^2d\mu, & \nabla^\W F_{|\mu}= - \div(\mu \nabla (|\nabla \ln \mu|^2 - \frac 2 \mu \Delta \mu)).
\end{array}
\]
Here $\Delta$ denotes the Laplace-Beltrami operator on $(M,g)$. As a
consequence, the Boltzmann entropy induces the heat equation as
gradient flow on $\P(M)$, and the information functional is the
norm-square of its gradient, i.e.  \[ \norm {\nabla ^\W
\Ent_{|\mu}}^2_{T_\mu\P} =  \norm{-\div(\mu \nabla \log
\mu)}^2_{T_\mu\P} = \int_M |\nabla \log \mu |^2 d\mu = I(\mu).\]

\textit{Covariant Derivative.}  \smallskip

 The Koszul identity for
the Levi-Civita connection and a straightforward computation of
commutators  show \cite{MR2358290} for the covariant derivative
$\nabla^\W$ associtated to $d_\W$ that
\[
 \langle \nabla ^\W _{V_{\phi_1}} V_{\phi_2}, V_{\phi_3}\rangle_{T_\mu} = \int_M \Hess \phi_2 (\nabla \phi_1, \nabla\phi_2) d\mu.
\]
For a smooth curve $t\to \mu(t)$ with $\dot \mu_t = V_{\phi_t}$ this yields
\[
\nabla ^\W _{\dot \mu } \dot \mu =  V_{\partial _t \phi + \frac 1 2 |\nabla \phi|^2}.
\]

\end{document}